\documentclass[12pt]{article}
\usepackage{amsmath, amsfonts, amssymb, amsthm, graphicx, geometry, arydshln, umoline, subfig, newtxtext, newtxmath, color, comment, soul, enumerate, bm,float,multirow}
\usepackage{appendix,natbib}
\usepackage{tikz}
\usetikzlibrary{matrix, positioning,arrows.meta,calc}
\usepackage[colorlinks=true,citecolor=blue]{hyperref}

\usepackage{booktabs}
\usepackage[table]{xcolor}
\def\BibTeX{{\rm B\kern-.05em{\sc i\kern-.025em b}\kern-.08em
    T\kern-.1667em\lower.7ex\hbox{E}\kern-.125emX}}
\usepackage{makecell}
\usepackage{bbm}

\theoremstyle{definition}
\newtheorem{theorem}{Theorem}

\newtheorem{lemma}{Lemma}

\newcommand{\argmin}{\mathop{\rm arg~min}\limits}

\begin{document}

\title{Note on an Axiomatization of the Baldwin Rule}
\author{
Leo Goto\thanks{Undergraduate School of Management, Department of Business Economics, Tokyo University of Science, 1-11-2, Fujimi, Chiyoda-ku, Tokyo, 102-0071, Japan. Email: leogotodeb@gmail.com}
\and
Satoshi Nakada\thanks{School of Management, Department of Business Economics, Tokyo University of Science, 1-11-2, Fujimi, Chiyoda-ku, Tokyo, 102-0071, Japan. Email: snakada@rs.tus.ac.jp}
}
\date{\today}
\maketitle

\begin{abstract}
\citet{GotoNakada2026} showed that the Baldwin rule can be characterized using \textit{Neutrality}, \textit{Strong Bottom Consistency}, \textit{Faithfulness}, \textit{Cancellation} and \textit{Bottom Independence}. 
While their proof relies on the technique of linear algebra and graph theory, in this note, we provide a simpler proof using purely combinatorial arguments based on permutations and amplified preference profiles, thereby providing a more transparent proof of the characterization.

\noindent\textit{JEL classification}: D71.
\newline\noindent\textit{Keywords}: Baldwin rule, Condorcet winner, Axiomatization.
\end{abstract}

\section{Introduction}
One of the fundamental desiderata of voting rules is the selection of the Condorcet winner: an alternative that strictly dominates every other alternative in pairwise comparisons.
A well-known implementation of this criterion is obtained by iteratively eliminating the alternatives with the lowest Borda score, as embodied in the Nanson rule \citep{Nanson1882} and the Baldwin rule \citep{Baldwin1926}.

The axiomatizations of the Baldwin rule have previously been obtained by \citet{FreemanBrillConitzer2014} and \citet{GotoNakada2026}.
In particular, the latter characterization relies on linear algebraic techniques in order to unify the proof with the characterization of the Nanson rule \citep{Nanson1882}.

In this note, we provide a purely permutational proof of the characterization of the Baldwin rule, thereby avoiding the linear-algebraic arguments used in \citet{GotoNakada2026}.
Our proof highlights more directly how each axiom contributes to the iterative elimination of alternatives with the lowest Borda score.

The remainder of the paper is organized as follows. 
Section \ref{sec: model} introduces our model.
Section \ref{sec: result} states our characterization theorem. 
Section \ref{sec: proof} provides the key combinatorial constructions and establishes the theorem. 

\section{Voting rule}\label{sec: model}
\subsection{Basic notation}
Let $\mathcal{N}=\{1,2,\ldots\}$ be a countably infinite set of potential voters.
Let $\mathcal U$ denote the collection of all finite and nonempty subsets of $N$.
An element $N\in\mathcal U$ is called an electorate.
Let $A=\{a_1,\ldots,a_m\}$ with $m\ge3$ be the universal set of alternatives.
Let $\mathcal B=\{B\subseteq A:B\neq\emptyset\}$ denote the collection of all nonempty feasible sets.
For every $B\in\mathcal B$, let $L(B)$ denote the set of strict linear orders on $B$.
For $\succ_i \, \in L(B)$ and $a\in B$, define the rank of $a$ under $\succ_i$ by
\[
r_B(\succ_i,a)
=
|\{b\in B:b\succ_i a\}|+1.
\]
For every nonempty $C\subseteq B$ and every $\succ_i\in L(B)$, let $\succ_i|_C\in L(C)$ denote the restriction of $\succ_i$ to $C$.
For every electorate $N\in\mathcal U$ and every feasible set $B\in\mathcal B$, define
$P_N(B)=\prod_{i\in N}L(B)$.
An element $\succ=(\succ_i)_{i\in N}\in P_N(B)$ is called a preference profile on $B$.
For every nonempty $C\subseteq B$, let $\succ\!\!|_C=(\succ_i|_C)_{i\in N}\in P_N(C)$
denote the restricted profile.

Let $N,N'\in\mathcal U$ be disjoint electorates.
For $\succ\in P_N(B)$ and $\succ'\in P_{N'}(B)$, their union is the profile $\succ\cup\succ'\in P_{N\cup N'}(B)$ defined by
\[
(\succ\cup\succ')_i=
\begin{cases}
\succ_i~\text{if}~i\in N,\\
\succ_i'~\text{if}~i\in N'.
\end{cases}
\]
For every positive integer $k$, let $k*\succ$ denote the $k$-fold replication of $\succ$, namely the union of $k$ pairwise disjoint copies of $\succ$.
Define $P(B)=\bigcup_{N\in\mathcal U}P_N(B)$ and $P=\bigcup_{B\in\mathcal B}P(B)$.
When no confusion can arise, we suppress the electorate superscript and
write simply $\succ$.

For every feasible set $B\in\mathcal B$, let $R(B)$ denote the set of weak orders on $B$.
Let $R=\bigcup_{B\in\mathcal B}R(B)$.
For $R\in R(B)$, let $P_R$ and $I_R$ denote its asymmetric and symmetric
parts, respectively.
A voting rule is a function $F:P\rightarrow R$ such that $F(\succ)\in R(B)$ for every profile $\succ\in P(B)$.
For a profile $\succ$, let $R_F(\succ)$ denote the weak order selected
by $F$.
Define the bottom tier of $F(\succ)$ by
\[
W_F(\succ)
=
\operatorname{bottom}(F(\succ))
=
\{a\in B:
bR_F(\succ)a
\text{ for every }b\in B
\}.
\]
If all alternatives are socially indifferent, we write $F(\succ)=(B)$.

\subsection{The Baldwin rule}
For every profile $\succ\in P(B)$ and every alternative $a\in B$, define the centered Borda score by
\[
\beta_a^B(\succ)
=
\sum_{b\in B\setminus\{a\}}
\left(
|\{i\in N:a\succ_i b\}|
-
|\{i\in N:b\succ_i a\}|
\right).
\]
Let $\beta^B(\succ)=(\beta_a^B(\succ))_{a\in B}$ denote the centered Borda-score vector.
When the feasible set is clear from the context, we simply write
$\beta_a(\succ)$ and $\beta(\succ)$.
Note that $\sum_{a\in B}\beta_a^B(\succ)=0$.
To relate the centered score to the standard positional Borda score,
define
\[
b_a^B(\succ)
=
\sum_{i\in N}
\left(
|B|-r_B(\succ_i,a)
\right).
\]
Then
\[
\beta_a^B(\succ)
=
2b_a^B(\succ)-|N|(|B|-1),
\]
so that the centered and positional Borda scores induce the same ranking.
The Borda rule, denoted by $F^{\rm Borda}$, ranks alternatives according to their centered Borda scores:
\[
aR_{F^{\rm Borda}(\succ)}b
\iff
\beta_a^B(\succ)\ge
\beta_b^B(\succ).
\]

The Baldwin rule is obtained by recursively eliminating the alternatives
with the lowest centered Borda score.
For every profile $\succ\,\in P(B)$, let
\[
B^0=B.
\]

For every round $t\ge0$, define
\[
E^t(\succ)
=
\arg\min_{a\in B^t}
\beta_a^{B^t}(\succ\!\!|_{B^t}).
\]

If $E^t(\succ)=B^t$, the procedure terminates.

Otherwise, let
\[
B^{t+1}
=
B^t\setminus E^t(\succ),
\]
and continue recursively.

For every alternative $a\in B$, define its elimination round
$\tau(a;\succ)$ by
\[
a\in E^{\tau(a;\succ)}(\succ).
\]

The Baldwin rule is then given by
\[
aR_{F^{\rm Baldwin}(\succ)}b
\iff
\tau(a;\succ)\ge
\tau(b;\succ).
\]

\section{An axiomatization of the Baldwin rule}\label{sec: result}

\cite{Young1974} characterized the Borda choice correspondence, and \cite{NitzanRubinstein1981} provided the corresponding characterization of the complete Borda ranking.
Following \cite{GotoNakada2026}, we consider the following recursive
counterparts of Young's axioms.

\bigskip

\noindent
\textbf{Neutrality.}
For every $B\in\mathcal B$, every $\succ\,\in P(B)$, and every $\sigma\in\Pi(A)$, $F(\sigma(\succ))=\sigma(F(\succ))$.

\bigskip

\noindent
\textbf{Strong Bottom Consistency.}
For every feasible set $B\in\mathcal B$, every pair of disjoint electorates $N,N'\in\mathcal U$, and every pair of profiles $\succ^N\in\mathcal P^N(B)$ and $\succ^{N'}\in\mathcal P^{N'}(B)$, if $W_F(\succ^N)\cap W_F(\succ^{N'})\neq\emptyset$,
then $W_F(\succ^N\cup\succ^{N'})=W_F(\succ^N)\cap W_F(\succ^{N'})$.

\bigskip

\noindent
\textbf{Faithfulness.}
For every one-voter profile $\succ\in P_{\{i\}}(B)$, $F(\succ)=\succ$.

\bigskip

\noindent
\textbf{Cancellation.}
For every $\succ\in P(B)$,
if $\beta^B(\succ)=\mathbf 0$, then $F(\succ)=(B)$.

\bigskip

\noindent
\textbf{Bottom Independence.}
For every $\succ\,\in P(B)$ such that $W_F(\succ)\neq B$, $F(\succ)|_{B\setminus W_F(\succ)}=F(\succ\!\!|_{B\setminus W_F(\succ)})$.

\textit{Neutrality} requires invariance with respect to the names of the
alternatives.
\textit{Strong Bottom Consistency} requires the current bottom tier to respond
consistently to electorate aggregation.
\textit{Faithfulness} requires the social ranking to coincide with the unique
voter's ranking.
\textit{Cancellation} requires complete social indifference whenever every
centered Borda score is zero.
\textit{Bottom Independence} requires the continuation ranking after the first
elimination to coincide with the ranking obtained by applying the rule
to the reduced profile.
The following theorem was established by \cite{GotoNakada2026}.

\begin{theorem}[\citealt{GotoNakada2026}]\label{main_baldwin}
A voting rule $F: \mathcal{P}\rightarrow \mathcal{R}$ satisfies \textit{Neutrality}, \textit{Strong Bottom Consistency}, \textit{Faithfulness}, \textit{Cancellation}, and \textit{Bottom Independence} if and only if it is the Baldwin rule, i.e., $F=F^{Baldwin}$.
\end{theorem}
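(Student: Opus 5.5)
The plan has two parts: verify that the Baldwin rule satisfies the five axioms, then show uniqueness. The key reduction for uniqueness is that it suffices to prove
\[
W_F(\succ)=\arg\min_{a\in B}\beta^B_a(\succ)\quad\text{for every }\succ\in P(B),
\]
with $W_F(\succ)=B$ exactly when all scores are equal. Once this holds, \textit{Bottom Independence} and induction on $|B|$ give $F=F^{\rm Baldwin}$. Indeed, $F(\succ)$ places $E^0(\succ)$ at the bottom and ranks $B\setminus E^0(\succ)$ as $F(\succ|_{B^1})$, which by the induction hypothesis is the Baldwin ranking on $B^1$. The base case $|B|=1$ is trivial.

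Necessity is routine. \textit{Neutrality} and \textit{Faithfulness} are immediate. For a single voter, Borda scores are strictly decreasing in rank, so Baldwin removes the last alternative each round and reproduces $\succ_i$. \textit{Cancellation} holds because all scores zero means $E^0=B$. \textit{Bottom Independence} holds by the recursive definition. \textit{Strong Bottom Consistency} follows from additivity, $\beta(\succ\cup\succ')=\beta(\succ)+\beta(\succ')$: if $a$ minimizes both score vectors, then the minimizers of the sum are exactly the common minimizers. When one bottom tier is all of $B$ (constant scores), the claim still holds.

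For sufficiency, the combinatorial plan runs as follows.

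\textbf{Step 1.} Show that $F(\succ)=(B)$ whenever all scores are equal, not only when they are zero. Since scores sum to zero, equal scores are in fact zero, so this is just \textit{Cancellation}.

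\textbf{Step 2.} Build elementary profiles with a prescribed score vector. For distinct $a,b\in B$, take a linear order $\pi$ with $a$ directly above $b$. Let $\pi'$ be $\pi$ with $a,b$ swapped, and let $\rho,\rho'$ be the reversals of $\pi,\pi'$. The pair $\{\pi,\rho'\}$ gives $a$ score $+2$, $b$ score $-2$, and everyone else $0$.

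\textbf{Step 3.} For a general profile $\succ$ with minimizer set $E=E^0(\succ)\neq B$, find a complementary profile $\succ'$ such that $\beta(k*\succ\cup\succ')$ is constant. Then $W_F(k*\succ\cup\succ')=B$ by \textit{Cancellation}, and in addition $W_F(\succ')$ should be exactly $E$, or at least contain it. Because the Borda score lattice is generated by the Step 2 transfers, a suitable nonnegative combination of transfers realizes $-k\beta(\succ)$ up to a constant.

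\textbf{Step 4.} Pin down the bottom tier. First, $W_F(k*\succ)=W_F(\succ)$ by repeated \textit{Strong Bottom Consistency} applied to copies. Then a contrapositive argument is used. Suppose $W_F(\succ)\cap E=\emptyset$ or $W_F(\succ)\not\subseteq E$. Adding a profile whose bottom tier is a controlled singleton (a one-voter profile via \textit{Faithfulness}, or a \textit{Neutrality}-symmetrized profile) yields two decompositions of the same combined profile. \textit{Strong Bottom Consistency} then forces incompatible bottom tiers.

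Concretely, the first goal is to show, for $a\in E$ and $b\notin E$, that $b\notin W_F(\succ)$ and $a\in W_F(\succ)$. The proposed device for this is to combine $\succ$ with a symmetric profile $Q$. Take $Q$ to be the union over all permutations of $B$ fixing $a$ and $b$ of a one-voter order with $b$ top and $a$ bottom, or the analogous cyclic construction. By \textit{Neutrality} together with \textit{Faithfulness}, $Q$ has bottom tier $\{a\}$, and its score vector is constant off $\{a,b\}$. Choosing the multiplicities so that $m*\succ\cup Q$ and a suitable cancelling completion interact, one compares $W_F$ computed two ways.

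The main obstacle is exactly this step. \textit{Strong Bottom Consistency} only constrains $W_F$ when the bottom tiers intersect, so a profile whose bottom tier is unknown cannot be directly "added" to anything. The approach to try first is to establish $W_F(\succ)\supseteq E$ and $W_F(\succ)\subseteq E$ separately via amplification. Take $k$ large enough that $k*\succ$ dominates a fixed gadget $Q$ whose bottom tier is known, so that the Borda minimizers of $k*\succ\cup Q$ equal $E\cap W_F(Q)$. Then choose $Q$ ranging over \textit{Neutrality}-images to cover each $a\in E$, and argue by induction on the score gap $\max\beta-\min\beta$, so that profiles with smaller spread are already settled.
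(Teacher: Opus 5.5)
Your necessity argument and the reduction of sufficiency to $W_F(\succ)=\arg\min_{a\in B}\beta_a^B(\succ)$ via \textit{Bottom Independence} are fine and match the paper. The gap is that sufficiency is not actually proved. Identifying the bottom tier is exactly the step you call ``the main obstacle,'' and the devices you offer do not resolve it.

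In Step 3, a constant $\beta^B(k*\succ\cup\succ')$ means $\beta^B(\succ')=-k\beta^B(\succ)$. The Baldwin bottom tier of such a $\succ'$ is $\arg\max\beta^B(\succ)$, which is disjoint from $E$, so your requirement $W_F(\succ')\supseteq E$ fails for the very rule you are characterizing. Indeed, if $W_F(k*\succ)\cap W_F(\succ')\neq\emptyset$, then \textit{Strong Bottom Consistency} and \textit{Cancellation} force both tiers to equal $B$. So this route yields at most disjointness information, and only once $W_F(\succ')$ is already known.

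In Step 4, ``the Borda minimizers of $k*\succ\cup Q$ equal $E\cap W_F(Q)$'' is a statement about scores, not about $W_F$. Applying \textit{Strong Bottom Consistency} to $k*\succ\cup Q$ requires $W_F(\succ)\cap W_F(Q)\neq\emptyset$, i.e.\ knowing $a\in W_F(\succ)$ in advance. The induction on the score gap is also never specified. Even $W_F(k*\succ)=W_F(\succ)$ is unjustified as stated: there is no anonymity axiom, so copies of $\succ$ on disjoint electorates are not known a priori to have intersecting bottom tiers.

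Two ideas are missing.

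\textbf{Zero-score profiles merge with anything.} A profile with $\beta^B=\mathbf 0$ has bottom tier $B$ by \textit{Cancellation}, and $B$ meets every bottom tier, so such a profile can always be merged under \textit{Strong Bottom Consistency}. Adding the reversed profile $\succ^{-1}$ to both sides shows that $\beta^B(\succ)=\beta^B(\succ')$ implies $W_F(\succ)=W_F(\succ')$. This is precisely what dissolves your ``cannot be added to anything'' obstacle, and it also gives $W_F(k*\succ)=W_F(\succ)$.

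\textbf{Symmetrization over relabelings fixing $C$.} Form $\Sigma_B(\succ,C)$, the union of all relabelings of $\succ$ that fix every alternative in $C$. By \textit{Neutrality} and \textit{Strong Bottom Consistency}, every $x\in W_F(\succ)\cap C$ remains in the bottom tier. For $C=\{x\}$ with $W_F(\succ)\neq B$, the tier becomes exactly $\{x\}$. Meanwhile $\beta^B(\Sigma_B(\succ,C))$ depends only on $(\beta^B_c(\succ))_{c\in C}$, so by the first idea it can be replaced by a convenient profile with the same scores.
\begin{itemize}
\item With $C=\{x\}$, compare the reversed symmetrized profile with one in which every voter ranks $x$ last, whose bottom tier is $\{x\}$ by \textit{Faithfulness}. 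This shows that positive-score alternatives are never in the bottom tier, and that a zero-score bottom alternative forces $\beta^B(\succ)=\mathbf 0$.
\item With $C=\{x,z\}$, reproduce $(\beta_x,\beta_z)$ up to a common factor by a union of two gadgets: one whose symmetrization has bottom tier $\{x\}$, and an $x\leftrightarrow z$-symmetric one whose symmetrization has bottom tier $\{x,z\}$. This rules out a bottom alternative $z$ with $\beta_z>\min\beta$.
\item The reverse inclusion then follows by a similar symmetrization.
\end{itemize}
Without these or equivalent steps, your argument never determines $W_F$ beyond unions of one-voter orders that share a last-ranked alternative.
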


\section{A proof}\label{sec: proof}
The proof proceeds in three steps.
First, in Subsection \ref{subsec: pref}, we introduce a specialized profile constructed by an amplification technique developed by \citet{HanssonSahlquist1976}, which will be useful to analyze the structural properties of the Borda scores associated with it.
Then, in Subsection \ref{subsec: mainproof}, we derive a sequence of implications of the axioms for these profiles. 
Finally, we combine these implications to show that the bottom-ranked alternatives must coincide with the alternatives attaining the minimum Borda score, from which the characterization follows by repeated applications of \textit{Bottom Independence}.

\subsection{Specific preference profiles}\label{subsec: pref}
The main technical ingredient is the amplified profile $\Sigma_B(\succ, C)$,  defined as follows.

Fix a feasible set $B\in\mathcal B$ with $q=|B|\ge3$.
For every $\succ\in P(B)$ and every nonempty $C\subseteq B$, define
\[
\Sigma_B(\succ,C)
=
\bigcup_{\substack{\sigma\in\Pi(B)\\
\sigma(x)=x\text{ for every }x\in C}}
\sigma(\succ).
\]
Thus, $\Sigma_B(\succ,C)$ is obtained by taking the union of all relabelings of $\succ$ that fix every alternative in $C$.
This construction averages over the alternatives in $B\setminus C$ while preserving the positions of the alternatives in $C$.

For every $z\in B$, its centered Borda score in the amplified profile is
\[
\beta_z^B(\Sigma_B(\succ,C))
=
\begin{cases}
(q-|C|)!\beta_z^B(\succ)
& \text{if }z\in C,\\[1mm]
-(q-|C|-1)!\displaystyle\sum_{x\in C}\beta_x^B(\succ)
& \text{if }z\in B\setminus C.
\end{cases}
\]
This follows from the symmetry of $\Sigma_B(\succ,C)$ with respect to all permutations of $B\setminus C$.

We next introduce two auxiliary profiles.
The first preserves the score difference between two selected alternatives, whereas the second equalizes their scores while lowering both relative to the remaining alternatives.

Fix $\succ\,\in P(B)$ and $x,z\in B$ such that
\[
\Delta_{xz}
=
\beta_z^B(\succ)-\beta_x^B(\succ)
>
0.
\]
Since $\Delta_{xz}=2\bigl(b_z^B(\succ)-b_x^B(\succ)\bigr)$, $\Delta_{xz}$ is even.

Define
\[
\succ_{\mathrm{diff},xz}
=
2\Delta_{xz}*
\succ_{\mathrm{diff},xz}^{\{i\}}
\]
for some $i\in N$, where
\[
r_B\bigl((\succ_{\mathrm{diff},xz}^{\{i\}})_i,x\bigr)
=
\begin{cases}
\frac{q+1}{2} & \text{if }q\text{ is odd},\\
\frac{q}{2}+1 & \text{if }q\text{ is even},
\end{cases}
\]
and
\[
r_B\bigl((\succ_{\mathrm{diff},xz}^{\{i\}})_i,z\bigr)
=
\begin{cases}
\frac{q-1}{2} & \text{if }q\text{ is odd},\\
\frac{q}{2} & \text{if }q\text{ is even}.
\end{cases}
\]

Define
\[
\succ_{\mathrm{level},xz}
=
l*\succ_{\mathrm{level},xz}^{\{i,j\}}
\]
for some $i,j\in N$, where
\[
r_B\bigl((\succ_{\mathrm{level},xz}^{\{i,j\}})_i,x\bigr)
=
r_B\bigl((\succ_{\mathrm{level},xz}^{\{i,j\}})_j,z\bigr)
=
\begin{cases}
\frac{q+3}{2} & \text{if }q\text{ is odd},\\
\frac{q}{2}+2 & \text{if }q\text{ is even},
\end{cases}
\]
\[
r_B\bigl((\succ_{\mathrm{level},xz}^{\{i,j\}})_i,z\bigr)
=
r_B\bigl((\succ_{\mathrm{level},xz}^{\{i,j\}})_j,x\bigr)
=
\begin{cases}
\frac{q+1}{2} & \text{if }q\text{ is odd},\\
\frac{q}{2}+1 & \text{if }q\text{ is even},
\end{cases}
\]
and
\[
r_B\bigl((\succ_{\mathrm{level},xz}^{\{i,j\}})_i,y\bigr)
=
r_B\bigl((\succ_{\mathrm{level},xz}^{\{i,j\}})_j,y\bigr)
\]
for every $y\in B\setminus\{x,z\}$.

Figures~\ref{fig:d_diff_odd}--\ref{fig:d_level_even} illustrate the positions of $x$ and $z$ in these profiles.
The alternatives in $B\setminus\{x,z\}$ such as $u,v,w$, and $y$ in the figures, may be placed arbitrarily.
\begin{figure}[H]
    \centering
    \begin{minipage}[b]{0.48\textwidth}
    \centering
    \begin{tikzpicture}

      \node (q) [
        matrix,
        matrix of nodes,
        nodes in empty cells,
        row sep=0.15cm,
        column sep=0.4cm,
        nodes={anchor=center},
        column 1/.style={nodes={align=right}},
        column 2/.style={nodes={align=center}},
        column 3/.style={nodes={align=center}}
      ] {
        & { \begin{tabular}{c}Borda \\ points\end{tabular} } & { $\succ^{\{i\}}_{\text{diff},xz}$ } \\
        1st place & $q-1$ & $u$ \\
        $\vdots$ & $\vdots$ & $\vdots$ \\
        $\frac{q-1}{2}$ nd place & $+2$ & $z$ \\
        $\frac{q+1}{2}$ nd place & $0$ & $x$ \\
        $\vdots$ & $\vdots$ & $\vdots$ \\
        $q$ th place & $1-q$ & $w$ \\
      };
      \coordinate (hy)  at ($(q-1-2.south)!0.5!(q-2-2.north) + (0, 0.1)$);
      \draw[thick] (q.west |- hy) -- (q.east |- hy);
      \coordinate (vx1) at ($(q-1-1.east)!0.5!(q-1-2.west) + (0.6, 0)$);
      \coordinate (vx2) at ($(q-1-2.east)!0.5!(q-1-3.west) + (-0.1,0)$);
      \coordinate (top_y) at ($(q-1-2.north) + (0, -0.2)$);
      \draw[thick] (vx1 |- top_y) -- (vx1 |- q.south);
      \draw[thick] (vx2 |- top_y) -- (vx2 |- q.south);
      \node at ($(q-4-3.east)!0.5!(q-5-3.east) + (0.7, 0)$) {$\dots$};
      \coordinate (arrow_start) at ($(q-1-3.north) + (0.2, 0)$);
      \coordinate (arrow_end)   at ($(q-1-3.north) + (1.6, -0.2)$);
      \draw[-{Latex[length=2mm, width=2mm]}, thick] (arrow_start) to[out=45, in=135] (arrow_end);
      \node[above, font=\footnotesize] at ($(arrow_start)!0.5!(arrow_end) + (0, 0.3)$) {Copy $2\Delta_{xz} - 1$ times};

    \end{tikzpicture}
    \caption{$\succ_{\text{diff},xz}$ where $q$ is odd}
    \label{fig:d_diff_odd}
\end{minipage}
    \hfill
    \begin{minipage}[b]{0.48\textwidth}
        \centering
        \begin{tikzpicture}

      \node (q) [
        matrix,
        matrix of nodes,
        nodes in empty cells,
        row sep=0.15cm,
        column sep=0.4cm,
        nodes={anchor=center},
        column 1/.style={nodes={align=right}},
        column 2/.style={nodes={align=center}},
        column 3/.style={nodes={align=center}}
      ] {
        & { \begin{tabular}{c}Borda \\ points\end{tabular} } & { $\succ^{\{i\}}_{\text{diff},xz}$ } \\
        1st place & $q-1$ & $u$ \\
        $\vdots$ & $\vdots$ & $\vdots$ \\
        $\frac{q}{2}$ nd place & $+1$ & $z$ \\
        $\frac{q}{2}+1$ st place & $-1$ & $x$ \\
        $\vdots$ & $\vdots$ & $\vdots$ \\
        $q$ th place & $1-q$ & $w$ \\
      };
      \coordinate (hy)  at ($(q-1-2.south)!0.5!(q-2-2.north) + (0, 0.1)$);
      \draw[thick] (q.west |- hy) -- (q.east |- hy);
      \coordinate (vx1) at ($(q-1-1.east)!0.5!(q-1-2.west) + (0.6, 0)$);
      \coordinate (vx2) at ($(q-1-2.east)!0.5!(q-1-3.west) + (-0.1,0)$);
      \coordinate (top_y) at ($(q-1-2.north) + (0, -0.2)$);
      \draw[thick] (vx1 |- top_y) -- (vx1 |- q.south);
      \draw[thick] (vx2 |- top_y) -- (vx2 |- q.south);
      \node at ($(q-4-3.east)!0.5!(q-5-3.east) + (0.7, 0)$) {$\dots$};
      \coordinate (arrow_start) at ($(q-1-3.north) + (0.2, 0)$);
      \coordinate (arrow_end)   at ($(q-1-3.north) + (1.6, -0.2)$);
      \draw[-{Latex[length=2mm, width=2mm]}, thick] (arrow_start) to[out=45, in=135] (arrow_end);
      \node[above, font=\footnotesize] at ($(arrow_start)!0.5!(arrow_end) + (0, 0.3)$) {Copy $2\Delta_{xz} - 1$ times};

    \end{tikzpicture}
    \caption{$\succ_{\text{diff},xz}$ where $q$ is even}
    \label{fig:d_diff_even}
    \end{minipage}
\end{figure}

\begin{figure}[htbp]
    \centering
    \begin{minipage}[b]{0.48\textwidth}
        \centering
        \begin{tikzpicture}

      \node (q) [
        matrix,
        matrix of nodes,
        nodes in empty cells,
        row sep=0.15cm,
        column sep=0.4cm,
        nodes={anchor=center},
        column 1/.style={nodes={align=right}},
        column 2/.style={nodes={align=center}},
        column 3/.style={nodes={align=center}},
        column 4/.style={nodes={align=center}}
      ] {
        & { \begin{tabular}{c}Borda \\ points\end{tabular} } & & \\
        1st place & $q-1$ & $u$ & $v$ \\
        $\vdots$ & $\vdots$ & $\vdots$ & $\vdots$ \\
        $\frac{q+1}{2}$ nd place & $0$ & $z$ & $x$ \\
        $\frac{q+3}{2}$ nd place & $-2$ & $x$ & $z$ \\
        $\vdots$ & $\vdots$ & $\vdots$ & $\vdots$ \\
        $q$ th place & $1-q$ & $w$ & $y$ \\
      };

      \coordinate (C34_center) at ($(q-2-3.north)!0.5!(q-2-4.north)$);
      \node (header) at (C34_center |- q-1-2.center) { $\succ^{\{i,j\}}_{\text{level},xz}$ };

      \coordinate (hy) at ($(q-1-2.south)!0.5!(q-2-2.north) + (0, 0.1)$);
      \coordinate (hline_R) at ($(q-2-4.east) + (0.2, 0)$);
      \draw[thick] (q.west |- hy) -- (hline_R |- hy);
      \coordinate (vx1) at ($(q-2-1.east)!0.5!(q-2-2.west) + (0.1, 0)$);
      \coordinate (vx2) at ($(q-2-2.east)!0.5!(q-2-3.west) + (-0.1, 0)$);
      \coordinate (top_y) at ($(q-1-2.north) + (0, -0.2)$);
      \draw[thick] (vx1 |- top_y) -- (vx1 |- q.south);
      \draw[thick] (vx2 |- top_y) -- (vx2 |- q.south);

      \coordinate (arrow_start) at ($(header.north) + (0.1, 0.1)$);
      \coordinate (arrow_end)   at ($(header.north) + (1.6, -0.1)$);
      \draw[-{Latex[length=2mm, width=2mm]}, thick] (arrow_start) to[out=35, in=145] (arrow_end);
      \node[above, font=\footnotesize] at ($(arrow_start)!0.5!(arrow_end) + (0, 0.2)$) {Copy $l - 1$ times};
      \coordinate (dots_y) at ($(q-4-4.east)!0.5!(q-5-4.east)$);
      \node at ($(dots_y) + (0.6, 0)$) {$\dots$};

    \end{tikzpicture}
        \caption{$\succ_{\text{level}, xz}$ where $q$ is odd}
        \label{fig:d_level_odd}
    \end{minipage}
    \hfill
    \begin{minipage}[b]{0.48\textwidth}
        \centering
        \begin{tikzpicture}

      \node (q) [
        matrix,
        matrix of nodes,
        nodes in empty cells,
        row sep=0.15cm,
        column sep=0.4cm,
        nodes={anchor=center},
        column 1/.style={nodes={align=right}},
        column 2/.style={nodes={align=center}},
        column 3/.style={nodes={align=center}},
        column 4/.style={nodes={align=center}}
      ] {
        & { \begin{tabular}{c}Borda \\ points\end{tabular} } & & \\
        1st place & $q-1$ & $u$ & $v$ \\
        $\vdots$ & $\vdots$ & $\vdots$ & $\vdots$ \\
        $\frac{q}{2} + 1$ st place & $-1$ & $z$ & $x$ \\
        $\frac{q}{2} + 2$ nd place & $-3$ & $x$ & $z$ \\
        $\vdots$ & $\vdots$ & $\vdots$ & $\vdots$ \\
        $q$ th place & $1-q$ & $w$ & $y$ \\
      };

      \coordinate (C34_center) at ($(q-2-3.north)!0.5!(q-2-4.north)$);
      \node (header) at (C34_center |- q-1-2.center) { $\succ^{\{i,j\}}_{\text{level},xz}$ };

      \coordinate (hy) at ($(q-1-2.south)!0.5!(q-2-2.north) + (0, 0.1)$);
      \coordinate (hline_R) at ($(q-2-4.east) + (0.2, 0)$);
      \draw[thick] (q.west |- hy) -- (hline_R |- hy);
      \coordinate (vx1) at ($(q-2-1.east)!0.5!(q-2-2.west) + (0.2, 0)$);
      \coordinate (vx2) at ($(q-2-2.east)!0.5!(q-2-3.west) + (-0.1, 0)$);
      \coordinate (top_y) at ($(q-1-2.north) + (0, -0.2)$);
      \draw[thick] (vx1 |- top_y) -- (vx1 |- q.south);
      \draw[thick] (vx2 |- top_y) -- (vx2 |- q.south);

      \coordinate (arrow_start) at ($(header.north) + (0.1, 0.1)$);
      \coordinate (arrow_end)   at ($(header.north) + (1.6, -0.1)$);
      \draw[-{Latex[length=2mm, width=2mm]}, thick] (arrow_start) to[out=35, in=145] (arrow_end);
      \node[above, font=\footnotesize] at ($(arrow_start)!0.5!(arrow_end) + (0, 0.2)$) {Copy $l - 1$ times};
      \coordinate (dots_y) at ($(q-4-4.east)!0.5!(q-5-4.east)$);
      \node at ($(dots_y) + (0.6, 0)$) {$\dots$};

\end{tikzpicture}
        \caption{$\succ_{\text{level}, xz}$ where $q$ is even}
        \label{fig:d_level_even}
    \end{minipage}
\end{figure}

By construction,
\[
\beta_z^B(\succ_{\mathrm{diff},xz})
-
\beta_x^B(\succ_{\mathrm{diff},xz})
=
4\Delta_{xz},
\]
and
\[
\beta_x^B(\succ_{\mathrm{level},xz})
=
\beta_z^B(\succ_{\mathrm{level},xz})
=
\begin{cases}
-2l & \text{if }q\text{ is odd},\\
-4l & \text{if }q\text{ is even}.
\end{cases}
\]

\begin{lemma}\label{lemma_difference}
Fix $B\in\mathcal B$ with $|B|\ge3$, $\succ\in P(B)$, and $x,z\in B$ such that $0>\beta_z^B(\succ)>\beta_x^B(\succ)$.
Then there exists $l\in\mathbb N$ such that
\[
\beta_y^B\bigl(
\succ_{\mathrm{diff},xz}
\cup
(l*\succ_{\mathrm{level},xz})
\bigr)
=
\beta_y^B(4*\succ)
\]
for every $y\in\{x,z\}$, and
\begin{align*}
&\beta_z^B\bigl(
\succ_{\mathrm{diff},xz}
\cup
(l*\succ_{\mathrm{level},xz})
\bigr)
-
\beta_x^B\bigl(
\succ_{\mathrm{diff},xz}
\cup
(l*\succ_{\mathrm{level},xz})
\bigr)\\
&\qquad
=
4\Delta_{xz}
=
\beta_z^B(4*\succ)-\beta_x^B(4*\succ).
\end{align*}
\end{lemma}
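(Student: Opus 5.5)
The plan is a direct computation. Everything reduces to single-voter centered Borda scores, and the only real work is choosing the replication parameter so that the needed quantity is a positive integer. For a single voter with ranking $\succ_i\in L(B)$ and $|B|=q$, an alternative at rank $r$ has $q-r$ alternatives below it and $r-1$ above it. Its centered Borda contribution is therefore $q+1-2r$. Since $\beta^B$ is additive over unions of profiles, I will compute the scores of $x$ and $z$ in each building block and then add them.

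First, the block $\succ_{\mathrm{diff},xz}$. If $q$ is odd, $z$ sits at rank $\frac{q-1}{2}$ and gets $2$, while $x$ sits at rank $\frac{q+1}{2}$ and gets $0$. With $2\Delta_{xz}$ copies this gives $\beta_z=4\Delta_{xz}$ and $\beta_x=0$. If $q$ is even, $z$ gets $1$ and $x$ gets $-1$ per voter, so $\beta_z=2\Delta_{xz}$ and $\beta_x=-2\Delta_{xz}$. In both cases the difference is $4\Delta_{xz}$, which is the displayed identity.

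Next, the two-voter level block. In it, $x$ and $z$ each occupy the two ranks $\frac{q+1}{2},\frac{q+3}{2}$ (odd $q$) or $\frac q2+1,\frac q2+2$ (even $q$), once each. This gives each of them $0+(-2)=-2$, respectively $-1+(-3)=-4$, per pair. I will write $L$ for the total number of level pairs in $l*\succ_{\mathrm{level},xz}$. I read this as a free positive multiplicity, so the construction is the union of $\succ_{\mathrm{diff},xz}$ with $L$ copies of the two-voter block. The common level contribution to $x$ and $z$ is then $c=-2L$ (odd $q$) or $c=-4L$ (even $q$).

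The combined scores are as follows. For odd $q$, $z$ gets $4\Delta_{xz}+c$ and $x$ gets $c$. For even $q$, $z$ gets $2\Delta_{xz}+c$ and $x$ gets $-2\Delta_{xz}+c$. The difference is $4\Delta_{xz}=\beta_z^B(4*\succ)-\beta_x^B(4*\succ)$ regardless of $c$. So it suffices to match one of the two scores to the corresponding score of $4*\succ$; the other then matches automatically.

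For odd $q$, I require $c=4\beta_x^B(\succ)$, that is, $L=-2\beta_x^B(\succ)$. This is a positive integer because $\beta_x^B(\succ)<0$.

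For even $q$, I require $-2\Delta_{xz}+c=4\beta_x^B(\succ)$, which gives $c=2(\beta_x^B(\succ)+\beta_z^B(\succ))$ and hence $L=-(\beta_x^B(\succ)+\beta_z^B(\succ))/2$. Here I use the hypothesis $\beta_z^B(\succ)<0$ (hence also $\beta_x^B(\succ)<0$) to get $L>0$. This is the main obstacle: I also need $L$ to be an integer. I will get this from the identity $\beta_a^B=2b_a^B-|N|(q-1)$. Since $q-1$ is odd, every $\beta_a^B(\succ)$ has the parity of $|N|$. Thus $\beta_x^B(\succ)$ and $\beta_z^B(\succ)$ have the same parity, their sum is even, and $L\in\mathbb N$.

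With this choice of multiplicity, $x$ and $z$ have exactly the scores they have in $4*\succ$, which proves both displayed claims.
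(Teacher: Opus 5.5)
Your proof is correct and follows the paper's route. You use the per-voter centered score $q+1-2r$, observe that the level pairs shift $x$ and $z$ equally so the difference $4\Delta_{xz}$ is automatic, and then choose the number of level pairs to match one score of $4*\succ$. You match $x$'s score where the paper matches $z$'s, but this yields the identical value: $l=2\Delta_{xz}-2\beta_z^B(\succ)=-2\beta_x^B(\succ)$ for odd $q$ and $l=\tfrac12(\Delta_{xz}-2\beta_z^B(\succ))=-(\beta_x^B(\succ)+\beta_z^B(\succ))/2$ for even $q$. Your parity argument makes explicit the integrality that the paper leaves implicit in the evenness of $\Delta_{xz}$.
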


\begin{proof}
By construction,
\[
\beta_z^B(\succ_{\mathrm{diff},xz})
=
\begin{cases}
4\Delta_{xz} & \text{if }q\text{ is odd},\\
2\Delta_{xz} & \text{if }q\text{ is even}.
\end{cases}
\]
If $q$ is odd, set $l=2\Delta_{xz}-2\beta_z^B(\succ)>0$, and if $q$ is even, set $l=\frac{1}{2}
\bigl(\Delta_{xz}-2\beta_z^B(\succ)\bigr)>0$.
In either case, $\beta_z^B\bigl( \succ_{\mathrm{diff},xz} \cup (l*\succ_{\mathrm{level},xz}) \bigr) = \beta_z^B(4*\succ)$.
Since $\beta_x^B(\succ_{\mathrm{level},xz})=\beta_z^B(\succ_{\mathrm{level},xz})$, the remaining assertions follow.
\end{proof}

\subsection{Proof of Theorem \ref{main_baldwin}}\label{subsec: mainproof}

The proof of Theorem \ref{main_baldwin} proceeds through the following lemmas.
Lemma \ref{lemma_bottomsum} shows an implication of \textit{Neutrality} and \textit{Strong Bottom Consistency}.
It shows that, once an alternative belongs to a proper bottom tier, the amplification procedure isolates it as the unique bottom alternative.

\begin{lemma}\label{lemma_bottomsum}
Suppose that $F$ satisfies \textit{Neutrality} and \textit{Strong Bottom Consistency}.
For every $B\in\mathcal B$, every $\succ\,\in P(B)$, and every $x\in W_F(\succ)\subsetneq B$,
we have $W_F(\Sigma_B(\succ,\{x\}))=\{x\}$.
\end{lemma}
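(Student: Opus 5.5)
Our plan is to apply Strong Bottom Consistency repeatedly to the permuted copies $\sigma(\succ)$ that make up $\Sigma_B(\succ,\{x\})$, and to use Neutrality to identify the bottom tier of each copy. Let $W=W_F(\succ)$, so that $x\in W$ and $W\neq B$. For each $\sigma\in\Pi(B)$ with $\sigma(x)=x$, Neutrality gives $F(\sigma(\succ))=\sigma(F(\succ))$. Hence $W_F(\sigma(\succ))=\sigma(W)$, and this set contains $x$. (We extend $\sigma$ to an element of $\Pi(A)$ by fixing every alternative in $A\setminus B$.)

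We then enumerate the permutations fixing $x$ as $\sigma_1,\dots,\sigma_k$, where $k=(q-1)!$, and argue by induction on $j$ that the union $U_j=\sigma_1(\succ)\cup\cdots\cup\sigma_j(\succ)$ satisfies
\[
W_F(U_j)=\bigcap_{s\le j}\sigma_s(W).
\]
The voter sets of the copies are pairwise disjoint, because $\Sigma_B$ is a union of disjoint relabeled copies. At step $j+1$ the intersection $W_F(U_j)\cap W_F(\sigma_{j+1}(\succ))$ contains $x$, so it is nonempty. Strong Bottom Consistency therefore applies and the intersection propagates. Taking $j=k$ yields $W_F(\Sigma_B(\succ,\{x\}))=\bigcap_{\sigma(x)=x}\sigma(W)$.

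It remains to show that this intersection equals $\{x\}$. Fix $y\in W\setminus\{x\}$; if none exists we are already done. Because $W\neq B$, there is some $w\in B\setminus W$. The transposition $\sigma$ of $y$ and $w$ fixes $x$, since $x\notin\{y,w\}$. We have $\sigma(W)\ni\sigma(y)=w$, while $y\notin\sigma(W)$, because $\sigma^{-1}(y)=w\notin W$. So every $y\neq x$ is excluded from the intersection, and $x$ lies in it.

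The main subtlety is bookkeeping rather than mathematics. We must make sure the copies in $\Sigma_B$ are placed on pairwise disjoint electorates drawn from the infinite set $\mathcal N$, and that the induction is set up so the nonemptiness hypothesis of Strong Bottom Consistency is checked at every step. That check is guaranteed by $x$ lying in every tier. The properness $W\subsetneq B$ is used only in the final exclusion step.
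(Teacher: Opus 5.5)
Your proof is correct and follows the paper's argument. Neutrality puts $x$ in every $W_F(\sigma(\succ))$ with $\sigma(x)=x$, repeated Strong Bottom Consistency turns the bottom tier of $\Sigma_B(\succ,\{x\})$ into the intersection of these tiers, and a permutation fixing $x$ that sends some $w\notin W_F(\succ)$ to $y$ excludes each $y\neq x$. Your explicit induction, and your transposition (with the identity permutation implicitly excluding $y\notin W$), just spell out the paper's ``repeated applications'' bookkeeping.
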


\begin{proof}
Fix $B\in\mathcal B$, $\succ\,\in P(B)$, and
$x\in W_F(\succ)\subsetneq B$.
\textit{Neutrality} implies $x\in W_F(\sigma(\succ))$ for every $\sigma\in\Pi(B)$ satisfying $\sigma(x)=x$.
Repeated applications of \textit{Strong Bottom Consistency} therefore give $x\in
W_F(\Sigma_B(\succ,\{x\}))$.

Fix $y\in B\setminus\{x\}$.
Since $W_F(\succ)\subsetneq B$, there exists $w\in B\setminus W_F(\succ)$.
Choose a permutation $\sigma\in\Pi(B)$ such that $\sigma(x)=x$ and $\sigma(w)=y$.
\textit{Neutrality} gives $y\notin W_F(\sigma(\succ))$.
\textit{Strong Bottom Consistency} then implies $y\notin W_F(\Sigma_B(\succ,\{x\}))$.
Thus, $W_F(\Sigma_B(\succ,\{x\}))=\{x\}$.
\end{proof}

The next lemma shows that the bottom tier depends only on the centered Borda-score vector.

\begin{lemma}\label{lemma_bottomborda1}
Suppose that $F$ satisfies \textit{Strong Bottom Consistency} and \textit{Cancellation}.
For every $B\in\mathcal B$ and every $\succ,\succ'\in P(B)$, if $\beta^B(\succ)=\beta^B(\succ')$,
then $W_F(\succ)=W_F(\succ')$.
\end{lemma}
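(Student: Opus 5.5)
The plan is the standard ``reversal'' trick: add to one profile a profile that cancels the Borda scores of the other, and use \textit{Cancellation} together with \textit{Strong Bottom Consistency} to transport the bottom tier. Fix $B$ and $\succ,\succ'\in P(B)$ with $\beta^B(\succ)=\beta^B(\succ')$. For any linear order, its reversal has negated centered Borda scores. So if $\overline{\succ'}$ denotes the profile obtained by reversing every voter's order in $\succ'$, placed on an electorate disjoint from those of $\succ$ and $\succ'$, then $\beta^B(\overline{\succ'})=-\beta^B(\succ')$. By additivity of centered Borda scores over unions of disjoint electorates, $\beta^B(\succ\cup\overline{\succ'})=\beta^B(\succ)-\beta^B(\succ')=\mathbf 0$, and likewise $\beta^B(\succ'\cup\overline{\succ'})=\mathbf 0$.

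By \textit{Cancellation}, $F(\succ\cup\overline{\succ'})=(B)$, so $W_F(\succ\cup\overline{\succ'})=B$. I will then write $\succ$ as a union with this cancelling profile. Take a fresh disjoint copy $\succ''$ of $\succ'$ and set $\Theta=\overline{\succ'}\cup\succ''$, a profile with $\beta^B(\Theta)=\mathbf 0$. \textit{Cancellation} gives $W_F(\Theta)=B$. Since $W_F(\succ)\cap W_F(\Theta)=W_F(\succ)\neq\emptyset$ (a bottom tier of a weak order on a nonempty finite set is nonempty), \textit{Strong Bottom Consistency} yields
\[
W_F(\succ\cup\Theta)=W_F(\succ).
\]
Grouping the same union differently, $\succ\cup\Theta=(\succ\cup\overline{\succ'})\cup\succ''$. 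Here $W_F(\succ\cup\overline{\succ'})=B$ by \textit{Cancellation}, so its intersection with $W_F(\succ'')$ is nonempty, and \textit{Strong Bottom Consistency} gives $W_F(\succ\cup\Theta)=W_F(\succ'')$. Hence $W_F(\succ)=W_F(\succ'')$.

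It remains to pass from $\succ''$ to $\succ'$, since they differ only in voter labels. This is the one delicate point, because no anonymity axiom is listed explicitly. The first thing I would try is to rerun the same argument with $\succ'$ itself in the role of $\succ$: this gives $W_F(\succ')=W_F(\succ'')$, because the argument only used that the two profiles have equal scores and that the electorates are disjoint. Since $\succ'$ and $\succ''$ live on disjoint electorates and have equal score vectors, that argument applies directly with $\succ''$ as the reference copy. Combining the two equalities gives $W_F(\succ)=W_F(\succ'')=W_F(\succ')$.

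The main obstacle is purely bookkeeping. The electorates of $\succ$, $\succ'$, $\overline{\succ'}$ and $\succ''$ must all be chosen pairwise disjoint, which is possible because $\mathcal N$ is infinite. If $\succ$ and $\succ'$ share voters, I would route the argument through an auxiliary disjoint copy of $\succ'$ as above. Every application of \textit{Strong Bottom Consistency} needs a nonempty intersection, and this holds because one side is always a cancelling profile with bottom tier $B$.
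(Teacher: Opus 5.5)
Your proof is correct and is essentially the paper's argument. The paper reverses $\succ$ on a fresh electorate and evaluates $W_F(\succ\cup\succ^{-1}\cup\succ')$ under two groupings, using \textit{Cancellation} so that one side of each union has bottom tier $B$; you do exactly this with $\succ\cup\overline{\succ'}\cup\succ''$. The paper simply assumes ``without loss of generality'' that $\succ$ and $\succ'$ have disjoint electorates, whereas your route through the fresh copy $\succ''$, applying the argument to both $\succ$ and $\succ'$, is what actually justifies that reduction when no anonymity axiom is available.
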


\begin{proof}
Fix $B\in\mathcal B$ and $\succ,\succ'\in P(B)$ such that $\beta^B(\succ)=\beta^B(\succ')$.
Without loss of generality, we may assume that the two profiles have disjoint electorates.
Let $\succ^{-1}$ be a profile on a third, disjoint electorate obtained by reversing every preference order in $\succ$:
\[
x\succ_i^{-1}y
\iff
y\succ_i x.
\]
By construction, $\beta^B(\succ\cup\succ^{-1})=\beta^B(\succ'\cup\succ^{-1})=\mathbf0_B$.
\textit{Cancellation} gives $W_F(\succ\cup\succ^{-1})=W_F(\succ'\cup\succ^{-1})=B$.
\textit{Strong Bottom Consistency} therefore implies $W_F(\succ)=W_F(\succ\cup\succ^{-1}\cup\succ')=W_F(\succ')$.
\end{proof}

\begin{lemma}\label{lemma_bottomborda2}
Suppose that $F$ satisfies \textit{Neutrality}, \textit{Strong Bottom Consistency}, and \textit{Cancellation}.
For every $B\in\mathcal B$, every $\succ\,\in P(B)$, and every $x\in B$, if $\beta_x^B(\succ)>0$, then $x\notin W_F(\succ)$.
\end{lemma}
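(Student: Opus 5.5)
The plan is by contradiction: assume $\beta_x^B(\succ)>0$ but $x\in W_F(\succ)$. Build two profiles whose centered Borda vectors are exact negatives of each other, each having $x$ in its bottom tier. \textit{Strong Bottom Consistency} will then force the combined profile to have bottom tier $\{x\}$, while \textit{Cancellation} forces it to be $B$.

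\emph{Step 1 (symmetrize $\succ$ around $x$).} Since $x\in W_F(\succ)$, \textit{Neutrality} gives $x\in W_F(\sigma(\succ))$ for every $\sigma\in\Pi(B)$ with $\sigma(x)=x$. Repeated \textit{Strong Bottom Consistency}, exactly as in the first half of Lemma~\ref{lemma_bottomsum}, then gives $x\in W_F(\Sigma_B(\succ,\{x\}))$. That argument does not need $W_F(\succ)\neq B$. By the score formula for amplified profiles,
\[
\beta_x^B(\Sigma_B(\succ,\{x\}))=(q-1)!\,\beta_x^B(\succ)>0,\qquad \beta_y^B(\Sigma_B(\succ,\{x\}))=-(q-2)!\,\beta_x^B(\succ)
\]
for all $y\neq x$. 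So the score vector is a positive multiple of $S_x$, where $S_x$ assigns $q-1$ to $x$ and $-1$ to every other alternative. Replication multiplies scores and preserves the bottom tier by \textit{Strong Bottom Consistency}. Hence, for a suitable $k$, the profile $P=k*\Sigma_B(\succ,\{x\})$ has $x\in W_F(P)$ and $\beta^B(P)=c\,S_x$, for some positive integer $c$ we may choose to be a common multiple as needed.

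\emph{Step 2 (a profile with vector $-S_x$ and bottom $\{x\}$).} Take a one-voter profile $\succ^0$ on $B$ that ranks $x$ last. By \textit{Faithfulness}, $W_F(\succ^0)=\{x\}\subsetneq B$. Lemma~\ref{lemma_bottomsum} then gives $W_F(\Sigma_B(\succ^0,\{x\}))=\{x\}$. Its score vector is $(q-1)!\,\beta_x^B(\succ^0)=-(q-1)!(q-1)$ at $x$ and $(q-2)!(q-1)$ at every other alternative, i.e.\ a positive multiple of $-S_x$. Replicating appropriately, which again keeps the bottom tier $\{x\}$ by \textit{Strong Bottom Consistency}, yields a profile $Q$ with $W_F(Q)=\{x\}$ and $\beta^B(Q)=-c\,S_x$. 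Common scaling of $P$ and $Q$ is always possible since both vectors are integer multiples of $S_x$. If needed, Lemma~\ref{lemma_bottomborda1} lets us replace either profile by any other with the same score vector.

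\emph{Step 3 (contradiction).} Put $P$ and $Q$ on disjoint electorates. Then $W_F(P)\cap W_F(Q)=\{x\}\neq\emptyset$, so \textit{Strong Bottom Consistency} gives $W_F(P\cup Q)=\{x\}$. But $\beta^B(P\cup Q)=\beta^B(P)+\beta^B(Q)=\mathbf 0$, so \textit{Cancellation} gives $W_F(P\cup Q)=B\neq\{x\}$, since $|B|\ge 3$. This contradiction shows $x\notin W_F(\succ)$.

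The only delicate point is the bookkeeping of scaling: one must verify that replicating a profile preserves its bottom tier, which follows from \textit{Strong Bottom Consistency} applied to identical copies, and that the two score vectors can be made exactly opposite. Both follow from the explicit score formula for $\Sigma_B(\cdot,\{x\})$.
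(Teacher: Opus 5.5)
Your proof is correct and follows essentially the paper's route. You symmetrize $\succ$ around $x$, pair it with a profile having the negated score vector in which every voter ranks $x$ last, and let \textit{Strong Bottom Consistency} collide with \textit{Cancellation}. Like the paper, you rely on \textit{Faithfulness} to pin that profile's bottom tier to $\{x\}$; the lemma's hypotheses omit it, but the lemma fails without it.

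Your bookkeeping is a bit cleaner. Scaling both sides replaces the detour through $(\Sigma_B(\succ,\{x\}))^{-1}$ and Lemma~\ref{lemma_bottomborda1}. Needing only $x\in W_F(\Sigma_B(\succ,\{x\}))$ also avoids the paper's unsupported inference from $\beta_x^B(\succ)>0$ to $W_F(\succ)\subsetneq B$. Two small corrections: assigning disjoint copies of a profile the same bottom tier tacitly uses Lemma~\ref{lemma_bottomborda1}, and the final step needs only $|B|\ge2$, not $|B|\ge3$.
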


\begin{proof}
Suppose, to the contrary, that there exist $B\in\mathcal B$, $\succ\,\in P(B)$, and $x\in B$ such that $\beta_x^B(\succ)>0$ and $x\in W_F(\succ)$.
Since $\beta_x^B(\succ)>0$, we have $W_F(\succ)\subsetneq B$.
Lemma~\ref{lemma_bottomsum} gives $W_F(\Sigma_B(\succ,\{x\}))=\{x\}$.

Let $(\Sigma_B(\succ,\{x\}))^{-1}$ be the inverse of$\Sigma_B(\succ,\{x\})$.
Then
\[
\beta^B\bigl(
\Sigma_B(\succ,\{x\})
\cup
(\Sigma_B(\succ,\{x\}))^{-1}
\bigr)
=
\mathbf0_B.
\]
Hence, by \textit{Cancellation},
\[
W_F\bigl(
\Sigma_B(\succ,\{x\})
\cup
(\Sigma_B(\succ,\{x\}))^{-1}
\bigr)
=
B.
\]

We now construct a profile with the same Borda-score vector as $\Sigma_B(\succ,\{x\})$.
Let $\operatorname{rep}^x$ be a replication of a two-voter profile in which both voters place $x$ first and rank the alternatives in $B\setminus\{x\}$ in reverse orders.
Choose the number of replications so that $\beta^B(\operatorname{rep}^x)=
\beta^B(\Sigma_B(\succ,\{x\}))$.
By Lemma~\ref{lemma_bottomborda1}, $W_F(\operatorname{rep}^x)=W_F(\Sigma_B(\succ,\{x\}))=\{x\}$.

Let $\operatorname{rep}^{-x}$ be the inverse profile of
$\operatorname{rep}^x$.
Every voter in $\operatorname{rep}^{-x}$ places $x$ last.
\textit{Faithfulness} and repeated applications of \textit{Strong Bottom Consistency} give $W_F(\operatorname{rep}^{-x})=\{x\}$.
Moreover, $\beta^B(\operatorname{rep}^{-x})=\beta^B((\Sigma_B(\succ,\{x\}))^{-1})$.
Lemma~\ref{lemma_bottomborda1} therefore implies
\[
W_F((\Sigma_B(\succ,\{x\}))^{-1})=\{x\}.
\]
\textit{Strong Bottom Consistency} now gives
\[
W_F\bigl(
\Sigma_B(\succ,\{x\})
\cup
(\Sigma_B(\succ,\{x\}))^{-1}
\bigr)
=
\{x\},
\]
contradicting \textit{Cancellation}.
\end{proof}

\begin{lemma}\label{lemma_bottomborda3}
Suppose that $F$ satisfies \textit{Neutrality}, \textit{Strong Bottom Consistency}, and \textit{Cancellation}.
For every $B\in\mathcal B$, every $\succ\in P(B)$, and every $x\in B$, if
$\beta_x^B(\succ)=0$ and $x\in W_F(\succ)$, then $\beta_y^B(\succ)=0$ for every $y\in B$.
\end{lemma}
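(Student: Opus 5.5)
Suppose, for contradiction, that $\beta_x^B(\succ)=0$ and $x\in W_F(\succ)$, but some $y\in B$ has $\beta_y^B(\succ)\neq 0$. Since the scores sum to zero, some alternative $w$ then has $\beta_w^B(\succ)>0$. Lemma~\ref{lemma_bottomborda2} gives $w\notin W_F(\succ)$, so $W_F(\succ)\subsetneq B$. Lemma~\ref{lemma_bottomsum} then yields $W_F(\Sigma_B(\succ,\{x\}))=\{x\}$. The plan is to show that $\Sigma_B(\succ,\{x\})$ has the zero score vector, so that \textit{Cancellation} forces $W_F(\Sigma_B(\succ,\{x\}))=B$. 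That contradicts the bottom tier being $\{x\}$, because $|B|\ge 3$.

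The key computation uses the displayed formula for the amplified profile with $C=\{x\}$. For $x$ itself,
\[
\beta_x^B(\Sigma_B(\succ,\{x\}))=(q-1)!\,\beta_x^B(\succ)=0 .
\]
For each $z\in B\setminus\{x\}$,
\[
\beta_z^B(\Sigma_B(\succ,\{x\}))=-(q-2)!\,\beta_x^B(\succ)=0 .
\]
Hence $\beta^B(\Sigma_B(\succ,\{x\}))=\mathbf 0_B$, and \textit{Cancellation} gives $F(\Sigma_B(\succ,\{x\}))=(B)$, so $W_F=B\neq\{x\}$. This is the contradiction, and it shows that every score of $\succ$ must vanish.

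The only delicate point is whether $W_F(\succ)\subsetneq B$ really holds, since Lemma~\ref{lemma_bottomsum} needs a proper bottom tier. This is exactly where Lemma~\ref{lemma_bottomborda2} enters, through the alternative with positive score. If one prefers to avoid the explicit formula, an alternative route is to check directly that $\Sigma_B(\succ,\{x\})$ is symmetric in $B\setminus\{x\}$. Then all alternatives other than $x$ share a common score, the scores sum to zero, and $x$ has score $0$, so the whole vector vanishes. Either way the argument is short, and no use of Lemma~\ref{lemma_difference} or \textit{Faithfulness} appears necessary.
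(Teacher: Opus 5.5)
Your proof is correct and matches the paper's argument step for step: a positive-score alternative gives a proper bottom tier via Lemma~\ref{lemma_bottomborda2}, Lemma~\ref{lemma_bottomsum} then isolates $x$ as the bottom of $\Sigma_B(\succ,\{x\})$, and the amplification formula with $\beta_x^B(\succ)=0$ makes that profile's score vector zero, which contradicts \textit{Cancellation}. One small correction: the contradiction $\{x\}\neq B$ only needs the alternative $w\neq x$ that you already found, not $|B|\ge 3$, which the lemma does not assume (the formula also holds trivially when $|B|=2$, since then $\Sigma_B(\succ,\{x\})=\succ$).
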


\begin{proof}
Suppose, to the contrary, that there exist $B\in\mathcal B$, $\succ\in P(B)$, and $x,y\in B$ such that $\beta_x^B(\succ)=0$,
$x\in W_F(\succ)$, and $\beta_y^B(\succ)\neq0$.
Since $\sum_{z\in B}\beta_z^B(\succ)=0$,
we may choose $y$ such that $\beta_y^B(\succ)>0$.
Lemma~\ref{lemma_bottomborda2} gives $y\notin W_F(\succ)$, and hence $W_F(\succ)\subsetneq B$.
Lemma~\ref{lemma_bottomsum} therefore implies
\[
W_F(\Sigma_B(\succ,\{x\}))=\{x\}.
\]

On the other hand, the amplification formula and $\beta_x^B(\succ)=0$ imply $\beta_z^B(\Sigma_B(\succ,\{x\}))=0$ for every $z\in B$.
\textit{Cancellation} gives
\[
W_F(\Sigma_B(\succ,\{x\}))=B,
\]
a contradiction.
\end{proof}

The preceding lemmas show that the bottom tier depends only on the centered Borda-score vector and cannot contain an alternative with a positive score.
We now identify it exactly.

\begin{lemma}\label{lemma_bottom}
Suppose that $F$ satisfies \textit{Neutrality}, \textit{Strong Bottom Consistency}, \textit{Faithfulness}, and \textit{Cancellation}.
Then, for every $B\in\mathcal B$ and every $\succ\,\in P(B)$, $W_F(\succ)=\argmin_{x\in B}\beta_x^B(\succ)$.
\end{lemma}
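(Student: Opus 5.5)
The plan is to prove the two inclusions $W_F(\succ)\subseteq\argmin_{x\in B}\beta_x^B(\succ)$ and $\argmin_{x\in B}\beta_x^B(\succ)\subseteq W_F(\succ)$, after disposing of the degenerate case. Write $M=\argmin_{x\in B}\beta_x^B(\succ)$. If $\beta^B(\succ)=\mathbf 0$, \textit{Cancellation} gives $W_F(\succ)=B=M$. So assume $\beta^B(\succ)\neq\mathbf0$; since the scores sum to zero, $\min_x\beta_x^B(\succ)<0$. Lemma~\ref{lemma_bottomborda2} then yields $\beta_x^B(\succ)\le 0$ for every $x\in W_F(\succ)$, and Lemma~\ref{lemma_bottomborda3} sharpens this to $\beta_x^B(\succ)<0$. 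In particular $W_F(\succ)\subsetneq B$.

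\emph{First inclusion.} Suppose $x\in W_F(\succ)$ but $x\notin M$, and pick $m\in M$, so that $0>\beta_x^B(\succ)>\beta_m^B(\succ)$. This is exactly the hypothesis of Lemma~\ref{lemma_difference}, with $m$ in the role of the lower alternative and $x$ in the role of $z$. Let $Q=\succ_{\mathrm{diff},mx}\cup(l*\succ_{\mathrm{level},mx})$ be the profile that lemma provides, with the same $m$- and $x$-scores as $4*\succ$. By the amplification formula with $C=\{m,x\}$, the score vector of $\Sigma_B(\cdot,\{m,x\})$ depends only on the scores of $m$ and $x$. Hence $\Sigma_B(4*\succ,\{m,x\})$ and $\Sigma_B(Q,\{m,x\})$ have the same score vector, and Lemma~\ref{lemma_bottomborda1} says they have the same bottom tier. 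From $x\in W_F(\succ)$, \textit{Strong Bottom Consistency} gives $x\in W_F(4*\succ)$. \textit{Neutrality} with permutations fixing $m$ and $x$, followed by repeated \textit{Strong Bottom Consistency}, then gives $x\in W_F(\Sigma_B(Q,\{m,x\}))$. The contradiction must come from computing this bottom tier directly from the pieces of $Q$:
\begin{itemize}
\item For the level part: $\Sigma_B(\succ_{\mathrm{level},mx},\{m,x\})$ gives $m$ and $x$ equal negative scores and all other alternatives positive scores. By Lemma~\ref{lemma_bottomborda2}, its bottom tier lies in $\{m,x\}$. Its score vector is invariant under the transposition of $m$ and $x$, so Lemma~\ref{lemma_bottomborda1} together with \textit{Neutrality} gives bottom tier exactly $\{m,x\}$.
\item For the diff part: the aim is to show that the bottom tier contains $m$ but not $x$. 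If it does not contain $x$ (with $m$ present), \textit{Strong Bottom Consistency} applied to the union yields bottom tier $\{m\}$, which excludes $x$.
\end{itemize}

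\emph{Second inclusion.} Let $y\in M$ and take any $x\in W_F(\succ)$; by the first inclusion $x\in M$, so $\beta_x=\beta_y$. The score vector of $\Sigma_B(\succ,\{x,y\})$ is invariant under the transposition $\tau$ swapping $x$ and $y$. By \textit{Neutrality}, \textit{Strong Bottom Consistency} and Lemma~\ref{lemma_bottomborda1}, both $x$ and $y$ lie in its bottom tier. To transfer this back to $\succ$ itself, I would add the reversed profile of the complementary permutations, so that the sum has the same score vector as a suitable replication of $\succ$. Then \textit{Strong Bottom Consistency} and Lemma~\ref{lemma_bottomborda1} would give $y\in W_F(\succ)$.

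The main obstacle is the diff-part step in the first inclusion. A single diff voter does not place $m$ last, so \textit{Faithfulness} does not apply directly. My first attempt will be to replace $\Sigma_B(\succ_{\mathrm{diff},mx},\{m,x\})$, via Lemma~\ref{lemma_bottomborda1}, by a score-equivalent profile built from voter pairs in which $x$ sits immediately above $m$. In one voter of each pair the two are at the bottom, in the other at the top, and the remaining alternatives are reversed between the two voters. In that profile, $x$ has positive score and is excluded by Lemma~\ref{lemma_bottomborda2}. To get $m$ into its bottom tier, I would combine it with Lemma~\ref{lemma_bottomborda3} and the level part, using \textit{Strong Bottom Consistency}.
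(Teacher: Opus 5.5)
There is a genuine gap at the diff-part step. You leave it open, and the repair you sketch cannot work when $q=|B|$ is odd. For odd $q$ each diff voter gives $m$ centered score $0$ and $x$ score $+2$. Writing $\Delta=\beta_x^B(\succ)-\beta_m^B(\succ)$, the profile $\Sigma_B(\succ_{\mathrm{diff},mx},\{m,x\})$ therefore has $\beta_m=0$, $\beta_x=4\Delta(q-2)!>0$, and $-4\Delta(q-3)!<0$ for every other alternative. Lemma~\ref{lemma_bottomborda3} then shows that $m$ is \emph{not} in this bottom tier, so your aim is false. For $q=3$ the profile is just $2\Delta$ copies of $x\succ m\succ u$, whose bottom tier is $\{u\}$. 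This bottom tier is disjoint from the level part's $\{m,x\}$, so \textit{Strong Bottom Consistency} says nothing about the union. Your pair profile always has $\beta_m+\beta_x=0$ and all other scores $0$. Hence no replication of it is score-equivalent to this $\Sigma_B$, and Lemma~\ref{lemma_bottomborda1} cannot be invoked. For even $q$, by contrast, the diff voter gives $x,m$ the scores $\pm1$, the other alternatives get exactly $0$ in $\Sigma_B$, and the bottom tier $\{m\}$ follows at once from Lemmas~\ref{lemma_bottomborda2} and~\ref{lemma_bottomborda3}; no detour is needed. You have put your finger on the step the paper passes over. Its Step~1 asserts $W_F(\Sigma_B(\succ_{\mathrm{diff},xz},\{x,z\}))=\{x\}$ ``by construction'' (paper's labels, with $x$ the minimizer). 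This holds for even $q$ by the argument just given, but fails for odd $q$.

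The missing idea is to use your two-voter pair profile, call it $D$, \emph{in place of} $\succ_{\mathrm{diff},mx}$ when building $Q$, rather than as a stand-in for $\Sigma_B(\succ_{\mathrm{diff},mx},\{m,x\})$. Each copy of $D$ adds $(-2,+2)$ to $(\beta_m,\beta_x)$ and $0$ to every other alternative, and each level pair lowers $m$ and $x$ equally. Take $Q=(\Delta*D)\cup(l*\succ^{\{i,j\}}_{\mathrm{level},mx})$, with $l=\Delta-2\beta_x^B(\succ)$ if $q$ is odd and $l=\frac12\bigl(\Delta-2\beta_x^B(\succ)\bigr)$ if $q$ is even. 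Then $\beta_m^B(Q)=4\beta_m^B(\succ)$ and $\beta_x^B(Q)=4\beta_x^B(\succ)$, so the rest of your first inclusion goes through. Moreover, $\Sigma_B(\Delta*D,\{m,x\})$ has $m$ negative, $x$ positive and all others $0$, so its bottom tier is $\{m\}$ by Lemmas~\ref{lemma_bottomborda2} and~\ref{lemma_bottomborda3}. Your level-part argument and \textit{Strong Bottom Consistency} then give $W_F(\Sigma_B(Q,\{m,x\}))=\{m\}$, the desired contradiction.

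Your second inclusion also needs a correction. The ``transfer back'' device fails. In the reversed complementary part $R$, the alternatives $x,y$ have score $-\bigl((q-2)!-1\bigr)\beta_x^B(\succ)>0$ for $q\ge4$, so they are not in $W_F(R)$. Since $W_F(\Sigma_B(\succ,\{x,y\}))\subseteq\{x,y\}$, the intersection is empty and \textit{Strong Bottom Consistency} is silent. The device is also unnecessary. Since $x\in W_F(\sigma(\succ))$ for every $\sigma$ fixing $x$ and $y$, \textit{Strong Bottom Consistency} gives $W_F(\Sigma_B(\succ,\{x,y\}))=\bigcap_\sigma W_F(\sigma(\succ))\subseteq W_F(\succ)$. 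This is exactly the paper's Step~2.
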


\begin{proof}
Fix $B\in\mathcal B$ and $\succ\,\in P(B)$.
If $|B|=1$, the conclusion is immediate.
Suppose that $B=\{x,z\}$.
If $\beta_x^B(\succ)=\beta_z^B(\succ)=0$,
the conclusion follows from \textit{Cancellation}.
Otherwise, cancel pairs of voters having opposite rankings.
The remaining profile consists only of copies of the same one-voter order.
\textit{Faithfulness} and repeated applications of \textit{Strong Bottom Consistency} imply that its bottom tier is the alternative with the lower centered Borda score.
Adding back the cancelling pairs does not change the bottom tier by \textit{Strong Bottom Consistency} and \textit{Cancellation}.
Thus, the conclusion also holds when $|B|=2$.

Suppose henceforth that $|B|\ge3$.
We prove both inclusions.

\medskip
\noindent
\textit{Step 1:} $W_F(\succ)\subseteq \argmin_{x\in B}\beta_x^B(\succ)$.

Suppose, to the contrary, that there exists $z\in W_F(\succ)$ such that $\beta_z^B(\succ)>\min_{y\in B}\beta_y^B(\succ)$.
Fix $x\in\argmin_{y\in B}\beta_y^B(\succ)$.
Lemma~\ref{lemma_bottomborda2} implies that
$\beta_z^B(\succ)\le0$.
If $\beta_z^B(\succ)=0$, Lemma~\ref{lemma_bottomborda3} would imply
$\beta^B(\succ)=\mathbf0_B$, contrary to $\beta_z^B(\succ)>\beta_x^B(\succ)$.
Hence, $0>\beta_z^B(\succ)>\beta_x^B(\succ)$.
By Lemma~\ref{lemma_difference}, there exists $l\in\mathbb N$ such that, for
$\succ'=\succ_{\mathrm{diff},xz}\cup(l*\!\succ_{\mathrm{level},xz})$, we have
\[
\beta^B(\Sigma_B(\succ',\{x,z\}))
=
4\beta^B(\Sigma_B(\succ,\{x,z\})).
\]
Lemma~\ref{lemma_bottomborda1} therefore gives $W_F(\Sigma_B(\succ',\{x,z\}))=W_F(\Sigma_B(\succ,\{x,z\}))$.
By construction,
\[
W_F(\Sigma_B(\succ_{\mathrm{diff},xz},\{x,z\}))
=
\{x\}
\]
and
\[
W_F(\Sigma_B(\succ_{\mathrm{level},xz},\{x,z\}))
=
\{x,z\}.
\]
Since $\Sigma_B(\succ',\{x,z\})=\Sigma_B(\succ_{\mathrm{diff},xz},\{x,z\})\cup\Sigma_B(l*\!\succ_{\mathrm{level},xz},\{x,z\})$, \textit{Strong Bottom Consistency} yields
\[
W_F(\Sigma_B(\succ,\{x,z\}))
=
\{x\}.
\]
However, \textit{Neutrality} implies $z\in W_F(\sigma(\succ))$ for every $\sigma\in\Pi(B)$ that fixes $z$.
In particular, this holds for every permutation fixing both $x$ and $z$.
\textit{Strong Bottom Consistency} therefore gives
\[
z\in W_F(\Sigma_B(\succ,\{x,z\})),
\]
a contradiction.

\medskip
\noindent
\textit{Step 2:} $\argmin_{x\in B}\beta_x^B(\succ)\subseteq W_F(\succ)$.

Fix $z\in\argmin_{y\in B}\beta_y^B(\succ)$.
Since $W_F(\succ)\neq\emptyset$, Step 1 implies that every $x\in W_F(\succ)$ satisfies $\beta_x^B(\succ)=\beta_z^B(\succ)$.
If $\min_{y\in B}\beta_y^B(\succ)=0$, then
$\beta^B(\succ)=\mathbf0_B$, and \textit{Cancellation} gives $W_F(\succ)=B$.
Suppose instead that $\beta_x^B(\succ)=\beta_z^B(\succ)=\min_{y\in B}\beta_y^B(\succ)<0$ for some $x\in W_F(\succ)$.
By the amplification formula, $\beta_x^B(\Sigma_B(\succ,\{x,z\}))=\beta_z^B(\Sigma_B(\succ,\{x,z\}))$ and every alternative in $B\setminus\{x,z\}$ has a strictly higher score.
Step 1 therefore gives
\[
W_F(\Sigma_B(\succ,\{x,z\}))
\subseteq
\{x,z\}.
\]
Moreover, \textit{Neutrality} and \textit{Strong Bottom Consistency} imply
$x\in W_F(\Sigma_B(\succ,\{x,z\}))$.
If $z\notin W_F(\succ)$, there exists a permutation fixing $x$ and $z$ under which $z$ is not in the bottom tier.
\textit{Strong Bottom Consistency} would then imply $z\notin W_F(\Sigma_B(\succ,\{x,z\}))$, so that
\[
W_F(\Sigma_B(\succ,\{x,z\}))=\{x\}.
\]
Interchanging $x$ and $z$ in the same construction gives the opposite conclusion.
Therefore,
\[
z\in W_F(\succ).
\]
\end{proof}

\begin{proof}[Proof of Theorem~\ref{main_baldwin}]
The ``if'' direction is immediate.
For the converse, suppose that $F$ satisfies all the axioms.

Fix $B\in\mathcal B$ and $\succ\,\in P(B)$.
Let $B^0=B$.
By Lemma~\ref{lemma_bottom}, $W_F(\succ)=\argmin_{a\in B^0}\beta_a^{B^0}(\succ|_{B^0})=E^0(\succ)$.
If $W_F(\succ)=B^0$, the procedure terminates and both rules rank all alternatives in $B^0$ indifferently.
Otherwise, $B^1=B^0\setminus W_F(\succ)$,
and \textit{Bottom Independence} gives
\[
F(\succ)|_{B^1}
=
F(\succ\!\!|_{B^1}).
\]

Proceed inductively.
Suppose that $B^t$ has been obtained.
Lemma~\ref{lemma_bottom}, applied to the profile $\succ\!\!|_{B^t}\in P(B^t)$, gives $W_F(\succ\!\!|_{B^t})=\argmin_{a\in B^t}\beta_a^{B^t}(\succ\!\!|_{B^t})=E^t(\succ)$.
If $E^t(\succ)=B^t$, the the recursion terminates.
Otherwise, $B^{t+1}=B^t\setminus E^t(\succ)$, and \textit{Bottom Independence} yields
\[
F(\succ)|_{B^{t+1}}
=
F(\succ\!\!|_{B^{t+1}}).
\]
Thus, at every round, the bottom tier selected by $F$ coincides with the elimination set of the Baldwin rule.
Therefore,
\[
F(\succ)=F^{\rm Baldwin}(\succ).
\]
Since $B$ and $\succ$ were arbitrary, $F=F^{\rm Baldwin}$.
\end{proof}

\bibliographystyle{ecta}
\bibliography{ref}

\end{document}